\def\ps@pprintTitle{%
 \let\@oddhead\@empty
 \let\@evenhead\@empty
 \def\@oddfoot{\centerline{\thepage}}%
 \let\@evenfoot\@oddfoot}
\theoremstyle{plain}
  \newtheorem{definition}{Definition}
  \newtheorem{lemma}{Lemma}[section]
  \newtheorem{theorem}{Theorem}
  \newtheorem{corollary}[definition]{Corollary}
\theoremstyle{definition}
\newtheorem{remark}{Remark}
\newcommand{\abs}[1]{\lvert#1\rvert}
\DeclareMathOperator*{\Ex}{E}
\DeclareMathOperator{\Cov}{Cov}
\DeclareMathOperator{\sign}{sign}
\newcommand{\spower}[2]{\sign{(#1)}\abs{#1}^{#2}}
\newcommand{\brspower}[2]{{#1}^{<#2>}}
\newcommand{\IID}{\text{IID}}
\newcommand{\tssymb}{X}
\newcommand{\tsetasymb}{\eta}
\newcommand{\tsmain}[1]{\tssymb_{#1}}        
\newcommand{\tseta}[1]{\tsetasymb_{#1}}
\newcommand{\tsvolat}[1]{\sigma_{#1}}
\newcommand{\ptsmain}[1]{\{\tsmain{#1}\}}
\newcommand{\ptseta }[1]{\{\tseta{#1} \}}
\newcommand{\Ptsmain}[1]{\{\tsmain{#1}\}_{#1 \in \mathbb{Z}}}
\newcommand{\Ptseta }[1]{\{\tseta{#1} \}_{#1 \in \mathbb{Z}}}
\newcommand{\tsspsymb}{Y}                
\newcommand{\tssp}[1]{\tsspsymb_{#1}}
\newcommand{\ptssp}[1]{\{\tssp{#1}\}}
\newcommand{\Ptssp}[1]{\{\tssp{#1}\}_{#1\in \mathbb{Z}}}
\newcommand{\tsspeta}[1]{\xi_{#1}}
\newcommand{\ptsspeta}[1]{\{\tsspeta{#1}\}}
\newcommand{\tsmainfield}[1]{\mathscr{F}_{#1}}
\newcommand{\tsmainfieldX}[2]{\mathscr{F}_{#1}^{#2}}
\newcommand{\acvf}[1]{\gamma_{#1}}
\newcommand{\acf}[1]{\rho_{#1}}
\newcommand{\acvfhat}[1]{\hat{\gamma}_{#1}}
\newcommand{\acfhat}[1]{\hat{\rho}_{#1}}
\newcommand{\boldacf}{\boldsymbol \rho}
\newcommand{\boldacfhat}{\hat{\boldsymbol \rho}}
\begin{document}


\begin{frontmatter}
\title{A signed power transformation with application to white noise testing}

\author[1]{Georgi~N.~Boshnakov\corref{cor1}}                                  
\ead{georgi.boshnakov@manchester.ac.uk}

\author[1]{Davide Ravagli}
\ead{davide.ravagli@manchester.ac.uk}

\affiliation[1]{%
  organization = {Department of Mathematics, The University of Manchester},
  addressline = {Oxford Road},
  city = {Manchester},
  postcode = {M13 9PL},
  country = {UK},
}

\cortext[cor1]{Corresponding author}
  
\begin{abstract}
  

  We show that signed power transforms of some ARCH-type processes give ARCH-type processes.
  The class of ARCH-type models for which this property holds contains many common ARCH
  and GARCH models. The results can be useful in testing for white noise when fourth moments
  don't exist and detecting white noise that is not ARCH-type.

\end{abstract}

\begin{keyword}
  signed power transform
  \sep ARCH-type processes
  \sep GARCH
  \sep white noise tests
  \sep heteroskedasticity
  \sep sample autocorrelations
\end{keyword}

\end{frontmatter}

\section{Introduction}
\label{sec:introduction}



The signed power function, $\spower{x}{\lambda}$, has some useful properties which make it
suitable as a tool for statistical inference in time series analysis.  Indeed, unlike the
unisigned power function, the signed power is one-to-one.  Moreover, if the power is smaller
than one, then the transformed process has higher order moments than the original process,
unless the original process has infinitely many moments.  This makes it possible to use
inference results for the transformed process that would not be applicable to the
untransformed one.  For example, many results in time series require (or a much simpler) if
fourth order moments exist.  If a researcher is prepared to assume little more than, say, the
existence of second moments, then taking a signed square root (or smaller power) of the time
series will give a process for which such results can be applied.

The signed power has also a multiplicative property when one of the arguments is restricted
to be positive, Section~\ref{sec:sign-power-transf}, which is particularly useful when
dealing with ARCH-type processes.

We show that signed power transforms of some ARCH-type processes give ARCH-type processes
and, as a consequence, the transformed processes are white noise.  The class of ARCH-type
models for which this property holds contains many common ARCH and GARCH models.  The results
can be useful for computing significance bands for testing autocorrelations for equality to
zero and testing for white noise. Testing that the signed power transforms of a time series
are ARCH-type for different powers can give further confidence in the use of ARCH-type
models, since the signed powers of other types of white noise do not necessarilly remain
white noise.

\section{ARCH-type processes}
\label{sec:arch-type-processes}

We consider a general classs of heteroskedastic models which includes many ARCH/GARCH
specifications. The following definition is from \cite{kokoszka2011nonlinearity} \citep[see
also][pp. 10--11]{francq2011garchbook}.  As usual, $\tsmainfield{t}$ and
$\tsmainfieldX{t}{X}$ denote the $\sigma$-field generated by the process up to time~$t$,
where the upper index designates the process in case of ambiguity.

\begin{definition}  \label{def:archtype}

  A time series $\Ptsmain{t}$ is said to be an ARCH-type process if it obeys the
  following model:
  \begin{align} \label{eq:archtype}
    \tsmain{t} = \tsvolat{t}\tseta{t}
    , \qquad \text{where $\ptseta{t} \sim \IID(0,1)$}
    ,
  \end{align}
  together with the conditions:
  \begin{enumerate}
  \item[\emph{(i)}]

    For each $t$, $\tsvolat{t}$ is $\tsmainfield{t-1}$ measurable and square
    integrable.

  \item[\emph{(ii)}] 

    $\tsvolat{t} \ne \text{constant}$, for at least one $t$ in the domain.

  \item[\emph{(iii)}]

    For each $t$, $\tseta{t}$ is independent of $\tsmainfield{t-1}$, and it is
    square integrable with $\Ex \tseta{t} = 0$ and $\Ex \tseta{t}^2 > 0$.
\end{enumerate}

\end{definition}

ARCH-type models include many ARCH and GARCH specifications. They are obtained by specifying
the dynamics of $\tsvolat{t}$ by a separate equation. 


\section{Signed power  transformation}
\label{sec:sign-power-transf}

We define the signed power by
\begin{equation*}
  \brspower{x}{\lambda} = \sign(x)\abs{x}^{\lambda}
  .
\end{equation*}
This notation has been used before by \citet[Section 2.7]{samorodnitsky1994stable}.
\cite{yeo2000power} point out that as a transformation to normality the signed power has
certain defficiencies but we are not using it for this purpose, so this is not a concern.

The signed power function is monotone and one-to-one.  It has the useful multiplicative
property $\brspower{(xy)}{\lambda} = \brspower{x}{\lambda}\brspower{y}{\lambda}$ for
$x,y \in \mathbb{R}$.  In application to ARCH models it is sufficient for this property to
hold for $x > 0, y \in \mathbb{R}$.  A family of asymmetric power transforms having this
property can be defined by setting $f_{c}(x) = \abs{x}^{\lambda}$, $0$, or
$c \abs{x}^{\lambda}$ for $x > 0$, $x = 0$, and $x < 0$, respectively. Continuity is ensured
if $\lambda > 0$, see Appendix~\ref{sec:signed-power-as} for further details.  The signed
power transform is obtained by setting $c = -1$. Asymmetric signed power transforms obtained
with $c < 0$ are one-to-one and can be useful, for example, to ensure that the transformed
variable has mean zero, although we do not pursue this further here.

\section{Transformed ARCH-type processes}
\label{sec:transf-arch-type}

Let $\tssp{t} = \brspower{\tsmain{t}}{\lambda}$ and
$\tsspeta{t} = \brspower{\tseta{t}}{\lambda}$. Noting that $\tsvolat{t} > 0$ we get
\begin{align*}
  \tssp{t}
      &= \spower{\tsmain{t}}{\lambda}
       = \spower{\tsvolat{t}\tseta{t}}{\lambda}
       = \tsvolat{t}^{\lambda}\spower{\tseta{t}}{\lambda} 
       = \tsvolat{t}^{\lambda} \brspower{\tseta{t}}{\lambda}
   \\ &= \tsvolat{t}^{\lambda} \tsspeta{t}
  .
\end{align*}

\begin{theorem}
  \label{transf-archtype}
  Let ${\ptsmain{t}}$ be an ARCH-type process such that the distribution of $\tseta{t}$ is
  symmetric.
  Then for any $\lambda\in(0,1]$, the process $\tssp{t} = \brspower{\tsmain{t}}{\lambda}$ is
  also an ARCH-type process.
\end{theorem}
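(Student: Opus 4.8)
The plan is to verify the three conditions of Definition~\ref{def:archtype} for the transformed process $\{\tssp{t}\}$, using the decomposition $\tssp{t} = \tsvolat{t}^{\lambda}\tsspeta{t}$ already established just before the statement. So I would set the candidate volatility to be $\tsvolat{t}^{\lambda}$ and the candidate innovation to be $\tsspeta{t} = \brspower{\tseta{t}}{\lambda}$, and check that this pair satisfies the defining model~\eqref{eq:archtype} together with conditions (i)--(iii).

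Let me explain the steps in more detail.

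First I would establish that $\{\tsspeta{t}\}$ is $\IID(0,1)$, after rescaling. Since the $\tseta{t}$ are i.i.d.\ and the map $x \mapsto \brspower{x}{\lambda}$ is a fixed deterministic bijection, the $\tsspeta{t}$ are automatically i.i.d.; the content is in the moments. Because the distribution of $\tseta{t}$ is symmetric and $\brspower{\cdot}{\lambda}$ is an odd function, $\tsspeta{t}$ has a symmetric distribution, hence $\Ex \tsspeta{t} = 0$ provided the mean exists. For the second moment, $\Ex \tsspeta{t}^2 = \Ex \abs{\tseta{t}}^{2\lambda}$, and since $\lambda \in (0,1]$ we have $2\lambda \le 2$, so by Jensen (or monotonicity of $L^p$ norms on a probability space) this is finite because $\Ex \tseta{t}^2 < \infty$; it is also strictly positive since $\Ex \tseta{t}^2 > 0$. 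I would then normalise by defining $\tseta{t}^* = \tsspeta{t}/\mStd{\tsspeta{t}}$ and $\tsvolat{t}^* = \tsvolat{t}^{\lambda}\mStd{\tsspeta{t}}$, so that $\tssp{t} = \tsvolat{t}^*\tseta{t}^*$ with $\{\tseta{t}^*\} \sim \IID(0,1)$, matching the form~\eqref{eq:archtype}.

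Next I would check conditions (i)--(iii) for the pair $(\tsvolat{t}^*, \tseta{t}^*)$. For (i), I must show $\tsvolat{t}^*$ is $\tsmainfieldX{t-1}{Y}$-measurable and square integrable. Measurability is the step needing the most care: $\tsvolat{t}$ is $\tsmainfieldX{t-1}{X}$-measurable, but the natural filtration of $\{\tssp{t}\}$ is $\tsmainfieldX{t-1}{Y}$, generated by past $\tssp{s} = \brspower{\tsmain{s}}{\lambda}$. Since $\brspower{\cdot}{\lambda}$ is a bijection with measurable inverse, the $\sigma$-fields generated by $X_s$ and $Y_s = \brspower{X_s}{\lambda}$ coincide, giving $\tsmainfieldX{t-1}{Y} = \tsmainfieldX{t-1}{X}$, so $\tsvolat{t}^*$ is indeed $\tsmainfieldX{t-1}{Y}$-measurable. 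Square integrability of $\tsvolat{t}^*$ follows from $\Ex \tsvolat{t}^{2\lambda} \le (\Ex \tsvolat{t}^2)^{\lambda} < \infty$ (again Jensen, as $\lambda \le 1$). Condition (ii), that $\tsvolat{t}^*$ is non-constant for some $t$, is inherited from the non-degeneracy of $\tsvolat{t}$ because $x \mapsto x^{\lambda}$ is strictly monotone on $(0,\infty)$. Condition (iii)---independence of $\tseta{t}^*$ from the past, zero mean, positive finite second moment---follows from the independence of $\tseta{t}$ from $\tsmainfieldX{t-1}{X}$ (independence is preserved under the measurable map and under the $\sigma$-field identity above) together with the moment computations already made.

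The main obstacle I anticipate is the measurability/filtration identification in (i): one must argue cleanly that the information carried by the transformed process equals that of the original, which is where the bijectivity and measurable invertibility of the signed power are essential. The symmetry hypothesis enters only to force $\Ex\tsspeta{t} = 0$; without it the transformed innovations would have nonzero mean and the model form would fail. The moment bounds in (i) and (iii) are the routine part, resting entirely on $\lambda \in (0,1]$ and Jensen's inequality, and I would not belabour them.
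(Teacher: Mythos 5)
Your proposal is correct and follows essentially the same route as the paper's proof: the same decomposition $\tssp{t} = \tsvolat{t}^{\lambda}\tsspeta{t}$, the same $\sigma$-field identification via bijectivity of the signed power, and the same verification of conditions (i)--(iii). Your explicit rescaling to make the innovations exactly $\IID(0,1)$ and the Jensen-inequality justifications of the moment bounds are slightly more detailed than the paper's terse treatment, but they are refinements of the same argument rather than a different approach.
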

\begin{proof}
  The process $\ptsspeta{t}$ is i.i.d with $\Ex \tsspeta{t}^{2} < \infty$, since $\Ptseta{t}$
  has these properties.  Since the distribution of $\tseta{t}$ is symmetric, so is the
  distribution of $\tsspeta{t}$. Hence $\Ex \tsspeta{t} = 0$. Further, if $\tsvolat{t}$ is
  not a constant for at least one $t$, then so is $\tsvolat{t}^{\lambda}$.  The signed power
  transform is one-to-one, so the $\sigma$-fields generated by $\ptsmain{t}$ and $\ptssp{t}$
  are the same, i.e., $\tsmainfieldX{t-1}{Y} = \tsmainfieldX{t-1}{X}$.
  $\tsvolat{t}^{\lambda}$ is $\tsmainfieldX{t-1}{X}$ measurable since so is $\tsvolat{t}$ and
  $\tsvolat{t} > 0$. Hence, $\tsvolat{t}^{\lambda}$ is $\tsmainfieldX{t-1}{Y}$ measurable.
  $\tsvolat{t}^{\lambda}$ is square integrable for $\lambda \in (0,1]$, since $\tsvolat{t}$
  is.  Also, $\Ex \tsspeta{t}^{2}$ is finite and positive. So, conditions (i)--(iii) of
  Definition~\ref{def:archtype} hold for $\tsspeta{t}$ and so $\Ptssp{t}$ is an ARCH-type
  process.
\end{proof}

\begin{remark}
  The distribution of $\ptsmain{t}$ in Theorem~\ref{transf-archtype} is not necessarilly
  symmetric.  The symmetry requirement is imposed only on $\tseta{t}$. This is to ensure that
  the mean remains zero after the transformation. If $\tseta{t}$ is restricted to a specific
  family of distributions, it may be possible to relax the symmetry requirement with the help
  of the asymetric power transform discussed in Section~\ref{sec:sign-power-transf} by using
  suitable $c = c(\lambda)$.
\end{remark}

\begin{remark}
  The requirement that $\lambda \in (0,1]$ is sufficient for our purposes and simplifies
  somewhat the statement of the result. $\lambda$ need only be small enough for the second
  moments to exist.
\end{remark}

\begin{remark}
  The moment conditions in Theorem~\ref{transf-archtype} are on $\tseta{t}$, not
  $\tsmain{t}$. Existence of moments of certain order for $\tseta{t}$ does not guarantee the
  same for $\tsmain{t}$. The latter may not have even second order moments.
\end{remark}


Asymptotic theory is significantly simpler for processes with finite fourth moments.
The transformed process will have this property at least for sufficiently small powers.
We formulate the result in the following theorem.

\begin{theorem} \label{th:sp-is-arch}
  Let ${\ptsmain{t}}$ be an ARCH-type process, such that the distribution of $\tseta{t}$ is
  symmetric (around zero). Let
  $\alpha^{*} = \sup \{\alpha : \Ex{\abs{\tsmain{t}}^{\alpha}}<\infty\}$. Assume that
  $\alpha^{*}>1$.
  Then the process $\tssp{t} = \brspower{\tsmain{t}}{\lambda}$ is an ARCH-type process with
  finite fourth moment for any $\lambda\in(0,\alpha^{*}/4)$, hence at least for
  $\lambda\in(0,1/4)$.
\end{theorem}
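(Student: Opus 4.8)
The plan is to reduce everything to moment conditions on the two independent factors of $\tsmain{t}$. Recall from the displayed identity preceding Theorem~\ref{transf-archtype} that $\tssp{t}=\tsvolat{t}^{\lambda}\tsspeta{t}$ with $\tsspeta{t}=\brspower{\tseta{t}}{\lambda}$, and that, by Definition~\ref{def:archtype}, $\tsvolat{t}$ is $\tsmainfieldX{t-1}{X}$-measurable while $\tseta{t}$ is independent of $\tsmainfieldX{t-1}{X}$; hence $\tsvolat{t}$ and $\tseta{t}$ are independent. The single tool I would use throughout is the factorization
\begin{equation*}
  \mEx{\abs{\tsmain{t}}^{\alpha}} = \mEx{\tsvolat{t}^{\alpha}}\,\mEx{\abs{\tseta{t}}^{\alpha}}, \qquad \alpha>0,
\end{equation*}
which holds for nonnegative variables regardless of finiteness, together with the observation that both factors are \emph{strictly positive} (since $\tsvolat{t}>0$ and $\tseta{t}$ is nondegenerate). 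Consequently $\mEx{\abs{\tsmain{t}}^{\alpha}}<\infty$ if and only if \emph{both} $\mEx{\tsvolat{t}^{\alpha}}<\infty$ and $\mEx{\abs{\tseta{t}}^{\alpha}}<\infty$.

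Next I would dispatch the fourth-moment claim directly. Since the signed power satisfies $\tssp{t}^{4}=\abs{\tsmain{t}}^{4\lambda}$, we have $\mEx{\tssp{t}^{4}}=\mEx{\abs{\tsmain{t}}^{4\lambda}}$. For $\lambda\in(0,\alpha^{*}/4)$ we have $4\lambda<\alpha^{*}$, and because the set $\{\alpha:\mEx{\abs{\tsmain{t}}^{\alpha}}<\infty\}$ is a downward-closed interval (from $\abs{x}^{\beta}\le 1+\abs{x}^{\alpha}$ for $0\le\beta\le\alpha$), the order $4\lambda$ lies in it, so $\mEx{\tssp{t}^{4}}<\infty$.

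It then remains to verify conditions (i)--(iii) of Definition~\ref{def:archtype} for $\ptssp{t}$, and here the only point not already covered by Theorem~\ref{transf-archtype} is square integrability of $\tsvolat{t}^{\lambda}$ when $\lambda$ exceeds $1$ (which can happen once $\alpha^{*}>4$). I would feed the finite fourth moment back through the factorization: $\mEx{\tsvolat{t}^{4\lambda}}\,\mEx{\abs{\tseta{t}}^{4\lambda}}=\mEx{\abs{\tsmain{t}}^{4\lambda}}<\infty$ with both factors positive forces $\mEx{\tsvolat{t}^{4\lambda}}<\infty$ and $\mEx{\abs{\tseta{t}}^{4\lambda}}<\infty$, hence also $\mEx{\tsvolat{t}^{2\lambda}}<\infty$ and $\mEx{\abs{\tseta{t}}^{2\lambda}}<\infty$ by downward-closedness. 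This gives square integrability of $\tsvolat{t}^{\lambda}$ (condition (i)) and of $\tsspeta{t}$ (needed in condition (iii)). The remaining items repeat the proof of Theorem~\ref{transf-archtype}: one-to-oneness of the signed power gives $\tsmainfieldX{t-1}{Y}=\tsmainfieldX{t-1}{X}$, so $\tsvolat{t}^{\lambda}$ is $\tsmainfieldX{t-1}{Y}$-measurable and, being a monotone image of a non-constant $\tsvolat{t}$, non-constant (condition (ii)); and $\ptsspeta{t}$ is i.i.d.\ with $\tsspeta{t}$ symmetric because $\tseta{t}$ is, so $\Ex\tsspeta{t}=0$ and $\Ex\tsspeta{t}^{2}=\mEx{\abs{\tseta{t}}^{2\lambda}}\in(0,\infty)$ (condition (iii)).

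The main obstacle is precisely the square integrability of $\tsvolat{t}^{\lambda}$ for $\lambda>1$: square integrability of $\tsvolat{t}$ alone says nothing about its higher powers, so the moment budget must be imported from $\tsmain{t}$ through the independence factorization rather than assumed on $\tsvolat{t}$; everything else is bookkeeping. Finally, I would note that the hypothesis $\alpha^{*}>1$ is in fact automatic: square integrability of $\tsvolat{t}$ and $\tseta{t}$ already yields $\mEx{\tsmain{t}^{2}}=\mEx{\tsvolat{t}^{2}}\,\mEx{\tseta{t}^{2}}<\infty$, so $\alpha^{*}\ge 2$ and the interval $(0,\alpha^{*}/4)$ contains $(0,1/4)$, delivering the closing ``hence'' clause.
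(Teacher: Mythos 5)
Your proof is correct, and at its core it takes the same route as the paper: reduce to Theorem~\ref{transf-archtype} and observe that $E Y_t^4 = E|X_t|^{4\lambda}<\infty$ whenever $4\lambda<\alpha^{*}$ (by downward-closedness of the set of finite moment orders). The paper's own proof is literally that one sentence. What you add, and what the paper silently skips, is the case $\lambda>1$, which arises when $\alpha^{*}>4$: Theorem~\ref{transf-archtype} is stated only for $\lambda\in(0,1]$, and its proof obtains square integrability of $\sigma_t^{\lambda}$ from that of $\sigma_t$ precisely because $\lambda\le 1$, an argument that fails for larger $\lambda$. Your independence factorization $E|X_t|^{\alpha}=E\sigma_t^{\alpha}\,E|\eta_t|^{\alpha}$ (valid by Tonelli since $\eta_t$ is independent of $\tsmainfield{t-1}$, with both factors strictly positive) imports the needed finiteness of $E\sigma_t^{2\lambda}$ and $E|\eta_t|^{2\lambda}$ from the assumed moment of $X_t$, which is the right way to close that case; the rest of your verification of conditions (i)--(iii) matches the paper's proof of Theorem~\ref{transf-archtype}. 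Your closing observation that $\alpha^{*}\ge 2$ automatically under Definition~\ref{def:archtype} (so the hypothesis $\alpha^{*}>1$ is redundant) is also correct and consistent with the paper's remark following the theorem. In short: same strategy, but your version is the complete one.
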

\begin{proof}
  The result follows from Theorem~\ref{transf-archtype} by observing that if
  $\lambda\in(0,\alpha^*/4)$ then the fourth moments exist.
\end{proof}

\begin{remark}
  It is common to assume that the second moment of $\tsmain{t}$ exists. Then $\lambda^{*}$ in
  Theorem~\ref{th:sp-is-arch} is at least $2$. So, in this case the signed power
  transformation gives a process with finite fourth moment at least for
  $\lambda \in (0,2/4) \equiv (0,1/2)$.
\end{remark}

\begin{remark} \label{remark:bracket}
  If $\Ex{\abs{\tsmain{t}}^{\alpha^{*}}}<\infty$ then the fourth moment exists for $\lambda$
  in the semi-closed interval $\lambda\in(0,\alpha^*/4]$.
\end{remark}


Since the signed power transform is one-to-one, the $\sigma$-fields generated by $\Ptssp{t}$
are the same as for the original process, $\Ptsmain{t}$. Hence $\Ptssp{t}$ inherits all
properties that depend only on these $\sigma$-fields, most notably mixing properties. We
have, for example, the following result.


\begin{theorem} \label{th:mixing}
  Under the conditions of Theorem~\ref{th:sp-is-arch}, if in addition $\Ptsmain{t}$ has 
  exponentilly decaying $\alpha$-mixing coefficients, then the same holds for $\Ptssp{t}$.
\end{theorem}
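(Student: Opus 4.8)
The plan is to exploit the one-to-one measurability of the signed power transform, exactly as was already used for the filtration identity $\tsmainfieldX{t-1}{Y} = \tsmainfieldX{t-1}{X}$ in the proof of Theorem~\ref{transf-archtype}. The point is that the $\alpha$-mixing coefficients of a process are a functional of the $\sigma$-fields generated by blocks of the process and do not depend on the numerical values taken. Since the coordinatewise transform is a bi-measurable bijection, I expect the mixing coefficients to be literally unchanged, so the transfer of exponential decay will be immediate rather than requiring any estimate.

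First I would recall the definition of the strong-mixing coefficients. For $\Ptsmain{t}$, writing $\mathscr{A}_{-\infty}^{t}$ and $\mathscr{A}_{t+n}^{\infty}$ for the $\sigma$-fields generated by $\{\tsmain{s} : s \le t\}$ and $\{\tsmain{s} : s \ge t+n\}$ respectively, set
\[
  \alpha_{X}(n) = \sup_{t}\ \sup_{A \in \mathscr{A}_{-\infty}^{t},\, B \in \mathscr{A}_{t+n}^{\infty}} \abs{\mPr{A \cap B} - \mPr{A}\mPr{B}}.
\]
The coefficients $\alpha_{Y}(n)$ of $\Ptssp{t}$ are defined in the same way from the $\sigma$-fields generated by the $\tssp{s}$.

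Next I would establish the key identity. Because $\tssp{s} = \brspower{\tsmain{s}}{\lambda}$ with $\lambda > 0$, and the signed power is a continuous, strictly increasing bijection of $\mathbb{R}$ with continuous inverse $y \mapsto \brspower{y}{1/\lambda}$, it is a Borel isomorphism. Consequently, for any index set $S \subseteq \mathbb{Z}$ the $\sigma$-field generated by $\{\tssp{s} : s \in S\}$ coincides with that generated by $\{\tsmain{s} : s \in S\}$, each variable being a measurable function of the other. In particular the past and future $\sigma$-fields entering the two suprema are identical, so the suprema range over the same collections of events, giving $\alpha_{Y}(n) = \alpha_{X}(n)$ for every $n$.

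The conclusion is then immediate: if $\alpha_{X}(n) \le C \rho^{n}$ for some $C > 0$ and $\rho \in (0,1)$, the equality $\alpha_{Y}(n) = \alpha_{X}(n)$ yields the same exponential bound for $\Ptssp{t}$, while Theorem~\ref{th:sp-is-arch} supplies the ARCH-type and finite-fourth-moment properties. I do not anticipate a genuine obstacle here; the only point needing a word of care is the bi-measurability of the transform, namely that its inverse is also Borel, which holds because the map is a homeomorphism of $\mathbb{R}$. The entire content is simply that $\alpha$-mixing is a property of the generated $\sigma$-fields, which the transform leaves invariant.
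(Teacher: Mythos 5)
Your argument is correct and is exactly the one the paper uses: immediately before the theorem the authors note that the signed power transform is one-to-one, so the $\sigma$-fields generated by $\Ptssp{t}$ and $\Ptsmain{t}$ coincide and all mixing properties are inherited, which is precisely your identity $\alpha_{Y}(n)=\alpha_{X}(n)$. You have merely spelled out the bi-measurability and the definition of the coefficients in more detail than the paper, which states the result without a separate formal proof.
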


The mixing condition in Theorem~\ref{th:mixing} is not a problem, since for most GARCH models
the stronger $\beta$-mixing condition with exponential rates holds
\citetext{\citealp{CarrascoChen2002}; \citealp[Chapter~3]{francq2011garchbook}}.

\section{Correlogram of transformed series}
\label{sec:corr-transf-seirs}

%
%
Let $\Ptsmain{t}$ be a second order stationary process,
$\acvf{\ell} = \Cov(\tsmain{t},\tsmain{t-\ell})$, $\acf{\ell} = \acvf{\ell} /
\acvf{0}$. 
Let also
$ \bar{X}_n = \frac{1}{n}\sum_{k=1}^n \tsmain{k} $,            
$\acvfhat{\ell}
  = \frac{1}{n} \sum_{k=1}^{n-\ell} \left(X_k -\bar{X}_n \right) 
                                  \left(X_{k+\ell}-\bar{X}_n \right)
$,
$ \acfhat{\ell}  = \acvfhat{\ell}/\acvfhat{0} $
be the sample mean, sample autocovariances and sample autocorrelations for a stretch of
length~$n$. 

We are interested in cases when $\boldacfhat = \left[\acf{1},\acf{2},\dots,\acf{m} \right]^T$
is asymptotically normal, i.e.
\begin{align} \label{eq:asynormal}
  \sqrt{n} (\boldacfhat - \boldacf)
  \stackrel{\mathrm{d}}{\to} 
  \mathcal{N}(\textbf{0}, \textbf{W})
  \qquad \text{as $n\to \infty$}
  ,
\end{align}
where $\boldacf = \left[\acf{1},\acf{2},\dots,\acf{m} \right]^T$.

%
$\ptsmain{t}$ is said to be linear if
$\tsmain{t} = \sum_{-\infty}^{\infty} \psi_{l}\varepsilon_{t-l}$, where $\varepsilon_{t}$ is
i.i.d. and $\sum \abs{\psi_{l}} < \infty$.  If $\ptsmain{t}$ is linear and $\varepsilon_{t}$
has finite fourth moment, then asymptotic normality holds and the entries of the $m \times m$
matrix $\textbf{W}$ are given by the Bartlett's formula
\citep[Theorem~1.1]{francq2011garchbook}
\begin{align*}
  w_{ij} 
  = \sum_{k=-\infty}^{\infty} 
  \left(\acf{k+i} \acf{k+j} + \acf{k-i}\acf{k+j} + 2\acf{i} \acf{k} \acf{k}^2
        - 2\acf{i} \acf{k} \acf{k+j} - 2\acf{j} \acf{k} \acf{k+i} \right)
  .
\end{align*}
If fourth moments exist but the process is not linear, asymptotic normality still holds but
the asymptotic distribution is not universal and depends on further properties of the
process. Results for processes that are not linear have been given by
\cite{RomanoThombs1996}, \cite{FrancqRoyZakoian2005}, and others. A general result,
containing the Bartlett's formula as a particular case is given by \citet[Theorem~B5]{francq2011garchbook}.

For linear processes the fourth moment requirement on $\varepsilon_{t}$ can be relaxed to
just $\Ex \varepsilon_{t}^{2} < \infty$ if $\sum_{l}\abs{l}\psi _{l}^{2} < \infty$.
However, this is a very specific case. In general, when
the largest moment of a process (not necessarilly linear) is in the interval (2,4), the
limiting distributions of the sample autocorrelations are stable with index smaller than~2
and normalising constants vastly different from~$\sqrt{n}$ \citep{DavisMikosch1998}.

For routine work it is therefore convenient to work under assumptions that ensure asymptotic
normality.
\citet[Proposition 3.1]{kokoszka2011nonlinearity} have shown that if an ARCH-type
process $\Ptsmain{t}$ is $\alpha$-mixing with exponentially decaying
coefficients and finite $(4+\delta)$th moment, then the sample autocorrelations
are assymptoticaly normal with diagonal covariance matrix with elements given by
\begin{align}
  \label{eq:wij}
  w_{ij}=\delta_{ij} \frac{E\left[X_1^2 X_{1+i}^2 \right]}{\left(EX_1^2 \right)^2}
  ,
\end{align}
where $\delta_{ij}$ is the Kronecker delta. Further, $w_{ii}$ can be estimated
consistently by 
\begin{equation}
  \label{eq:3}
  \hat{w}_{ij} = nc_{in} \frac{\sum_{d=1}^{n-i}X_{d}^{2}X_{d+i}^{2}}{
                            \left( \sum_{d=1}^{n}X_{d}^{2} \right)^{2} }
   , 
\end{equation}
where $c_{in}$ are asymptotically equal to one, e.g. $c_{in} = n/(n-i)$ or
$c_{in} = 1$ \citep[Proposition 3.3]{kokoszka2011nonlinearity}.

When $\delta$ is close to zero, the variance is large. So, if the process does not have 4th
moments or has moments just over four, inference based on this result may not be reliable.

The signed power transform can be used to ensure more reliable inference by transforming the
process to one with higher finite moments. From the results in the previous section and
\citet[Proposition 3.1]{kokoszka2011nonlinearity} we obtain the following result.

\begin{theorem}
  \label{spower-acf}
  Suppose that $\Ptsmain{t}$ is an ARCH-type process with exponentially decaying
  $\alpha$-mixing coefficients and finite $s\ge2$ moment. Let
  $\tssp{t} = \brspower{\tsmain{t}}{\lambda}$ where $\lambda\in(0,s/4)$.  Let also
  $\acfhat{i}$, $i=1,\dots,m$, be the sample autocorrelations of the transformed time series,
  $\tssp{t}$, ${w}_{ij}$ and $\hat{w}_{ii}$ be computed as in
  Equations~\eqref{eq:wij}--\eqref{eq:3} with $X_{t}$ replaced by $\tssp{t}$.

  Then Equation~\eqref{eq:asynormal} holds with elements of the diagonal matrix $W$ given
  by $w_{ij}$. The diagonal elements of $W$ can be estimated by $\hat{w}_{ii}$.
\end{theorem}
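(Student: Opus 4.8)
The plan is to reduce the claim to a direct application of \citet[Proposition~3.1]{kokoszka2011nonlinearity}, whose hypotheses are that the process under study is \emph{(a)}~ARCH-type, \emph{(b)}~$\alpha$-mixing with exponentially decaying coefficients, and \emph{(c)}~has a finite $(4+\delta)$th moment for some $\delta>0$. I would therefore check that the transformed series $\Ptssp{t}$ inherits these three properties from $\Ptsmain{t}$; once that is done, the limit law \eqref{eq:asynormal} with diagonal $W$ given by \eqref{eq:wij}, and the consistency of the estimator $\hat{w}_{ii}$, follow verbatim from Propositions~3.1 and~3.3 of that reference, read with $\tsmain{t}$ replaced by $\tssp{t}$.

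Properties \emph{(a)} and \emph{(b)} are essentially already available. Since $\Ex\abs{\tsmain{t}}^{s}<\infty$ we have $\alpha^{*}\ge s$, so the exponent satisfies $\lambda\in(0,s/4)\subseteq(0,\alpha^{*}/4)$ and Theorem~\ref{th:sp-is-arch} gives that $\Ptssp{t}$ is ARCH-type (indeed with finite fourth moment); this settles \emph{(a)}. For \emph{(b)} I would note that the signed power is one-to-one, so the $\sigma$-fields generated by $\Ptssp{t}$ coincide with those generated by $\Ptsmain{t}$; hence the $\alpha$-mixing coefficients of the two processes are identical and exponential decay is inherited, which is precisely Theorem~\ref{th:mixing}.

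The one step that needs care is \emph{(c)}, and this is where the \emph{open} interval $\lambda\in(0,s/4)$ does the work. Because $\tssp{t}=\brspower{\tsmain{t}}{\lambda}$, for every $\delta\ge0$ we have $\Ex\abs{\tssp{t}}^{4+\delta}=\Ex\abs{\tsmain{t}}^{(4+\delta)\lambda}$. As $\lambda<s/4$ is a strict inequality, $4\lambda<s$, so I can fix $\delta>0$ small enough that $(4+\delta)\lambda<s$ still holds (for instance $\delta=(s-4\lambda)/(2\lambda)$ gives $(4+\delta)\lambda=(s+4\lambda)/2<s$). The moment hierarchy then yields $\Ex\abs{\tsmain{t}}^{(4+\delta)\lambda}\le 1+\Ex\abs{\tsmain{t}}^{s}<\infty$, so \emph{(c)} holds. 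With \emph{(a)}--\emph{(c)} in hand the theorem follows by invoking \citet[Propositions~3.1 and~3.3]{kokoszka2011nonlinearity}. I expect the only real obstacle to be this moment bookkeeping: the strictness of $\lambda<s/4$ is exactly what upgrades the bare fourth moment supplied by Theorem~\ref{th:sp-is-arch} to the $(4+\delta)$th moment demanded by the mixing central limit theorem, while everything else is a transcription of the earlier results.
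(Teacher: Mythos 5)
Your proposal is correct and follows exactly the route the paper intends: the paper gives no written proof, only the remark that the theorem follows ``from the results in the previous section and \citet[Proposition 3.1]{kokoszka2011nonlinearity}'', which is precisely your combination of Theorem~\ref{th:sp-is-arch} (ARCH-type with finite fourth moment), Theorem~\ref{th:mixing} (inherited exponential $\alpha$-mixing), and Propositions~3.1 and~3.3 of that reference. In fact you supply the one detail the paper leaves implicit --- that the strict inequality $\lambda<s/4$ yields a $(4+\delta)$th moment for some $\delta>0$ rather than a bare fourth moment --- so your write-up is, if anything, more complete than the original.
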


Portmanteau tests can also be carried out.

\begin{corollary}[Portmanteau test]
  Under the conditions of Theorem~\ref{spower-acf} we have
  \begin{align*}
    n \sum_{i=1}^{m} \frac{\acfhat{i}^{2}}{\hat{w}_{ii}}
    \sim \chi^{2}_{m}
    \qquad \text{(approximately)}
    .
  \end{align*}
\end{corollary}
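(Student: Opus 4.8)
The plan is to read the statistic as a sum of squares of asymptotically independent standard normal variables and then to invoke Slutsky's theorem to replace the unknown normalising variances by their consistent estimates. First I would observe that, by Theorem~\ref{th:sp-is-arch}, the transformed series $\Ptssp{t}$ is itself an ARCH-type process and hence white noise: writing $\tssp{t} = \tsvolat{t}^{\lambda}\tsspeta{t}$ with $\tsspeta{t}$ independent of $\tsmainfield{t-1}$ and $\Ex\tsspeta{t} = 0$, both the mean and the autocovariances at nonzero lags vanish, so its population autocorrelations satisfy $\acf{i} = 0$ for every $i\ge 1$, i.e.\ $\boldacf = \mathbf{0}$. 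Substituting $\boldacf = \mathbf{0}$ into the conclusion of Theorem~\ref{spower-acf}, the asymptotic normality statement in Equation~\eqref{eq:asynormal} reads $\sqrt{n}\,\boldacfhat \stackrel{\mathrm{d}}{\to} \mathcal{N}(\mathbf{0}, W)$, with $W$ diagonal.

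Second, because $W$ is diagonal, the scaled components $\sqrt{n}\,\acfhat{i}/\sqrt{w_{ii}}$, $i = 1,\dots,m$, are jointly asymptotically independent standard normal. The sum of their squares is therefore, by the continuous mapping theorem, asymptotically chi-squared with $m$ degrees of freedom:
\begin{align*}
  \sum_{i=1}^{m} \frac{n\,\acfhat{i}^{2}}{w_{ii}}
  \stackrel{\mathrm{d}}{\to}
  \chi^{2}_{m}
  .
\end{align*}

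Third, Theorem~\ref{spower-acf} also provides that $\hat{w}_{ii}$ is a consistent estimator of $w_{ii}$, so $w_{ii}/\hat{w}_{ii} \stackrel{\mathrm{d}}{\to} 1$. Replacing $w_{ii}$ by $\hat{w}_{ii}$ in the denominator therefore leaves the limiting law unchanged by Slutsky's theorem, which yields $n\sum_{i=1}^{m}\acfhat{i}^{2}/\hat{w}_{ii} \stackrel{\mathrm{d}}{\to} \chi^{2}_{m}$; the stated approximate chi-squared distribution is the finite-sample reading of this limit.

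I do not anticipate a serious obstacle, since this is the standard derivation of a portmanteau statistic once asymptotic normality with a diagonal covariance is in hand. The one point that genuinely carries the argument is the identity $\boldacf = \mathbf{0}$: it is \emph{not} assumed but earned, resting on the white-noise property of the transformed process, which in turn depends on $\tssp{t}$ being ARCH-type as guaranteed by Theorem~\ref{th:sp-is-arch}. Everything else is a routine application of the continuous mapping and Slutsky theorems to the result already established in Theorem~\ref{spower-acf}.
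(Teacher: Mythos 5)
Your proposal is correct, and it fills in exactly the argument the paper leaves implicit: the corollary is stated without proof, and the intended justification is precisely the standard one you give --- $\boldacf = \mathbf{0}$ from the white-noise property of the ARCH-type transformed series, asymptotic normality with diagonal $W$ from Theorem~\ref{spower-acf}, the continuous mapping theorem for the sum of squares, and Slutsky's theorem to substitute the consistent estimates $\hat{w}_{ii}$. The only cosmetic remark is that $\hat{w}_{ii} \stackrel{p}{\to} w_{ii}$ is the natural way to phrase the consistency step, though writing convergence in distribution to the constant $1$ is equivalent and does no harm.
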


There are various ways to use this result.  If it is suspected that the highest moment of a
time series is, for example, in (2,4), then significance bounds can be constructed, based on
the above result, for the sample autocorrelations of the signed power transformed series with
one or more values for $\lambda \le 1/2$, say $\lambda=1/3$. The same significance bounds
remain valid if the time series has finite moments larger than~4, so not much (if anything)
is lost by being cautious.  The same applies to the portmanteau test.

It may also be prudent to verify that signed powers are white noise even if the moments of
the time series are not of concern. Moreover, Theorem~\ref{th:sp-is-arch} shows that although
an ARCH-type process is weak white noise, it is not so weak.  The signed powers of other
types of white noise do not need to be white noise. So these tools can be used to distinguish
ARCH-type white noise from other types of white noise by considering a range of $\lambda$'s.

\section{Examples}
\label{sec:illustr-exampl}

In this section we consider some simulation examples to illustrate the ideas.
The computations were done with R \citep{Rbase}.

\subsection{ARCH(1) model}
\label{sec:arch1-model}

Let $\ptsmain{t}$ be an ARCH(1) process specified by
\begin{align} \label{eq:arch1}
  \tsmain{t} &= \tsvolat{t} \tseta{t} ,\\
  \tsvolat{t}^{2} &= 0.01 + \alpha_{1} \tsmain{t}^{2} \nonumber
                   ,                 
\end{align}
where $\ptseta{t}$ is i.i.d., $N(0,1)$, and independent of $\tsmainfield{t-1}$. 

In this example we explore the performance of the statistics for values of the parameter
$\alpha_{1} = 0.05, 0.15, \dots, 0.95$. This model was considered by
\citet{kokoszka2011nonlinearity} who put emphasis on the quality of estimation of $w_{11}$,
while we look at its effect on the rejection rate of the underlying null hypothesis. We
include also values of $\alpha_{1} > 1/\sqrt{3} \approx 0.577$, for which the process doesn't
have fourth moments and so signed power transformation is needed to ensure validity of the
theory.

For each $\alpha_{1}$, we simulated 10,000 time series of length $n=2000$ from the model.
For each time series, $\ptsmain{t}$, we calculated $\acfhat{1}$ and $\hat{w}_{11}$ of the
time series and its signed powers $\lambda = 0.1, 0.2, \dots, 0.9$.
Table~\ref{tab:arch1:n2000} shows the fraction of times when the lag~1 sample autocorrelation
coefficient was outside the 95\% limits suggested by Theorem~\ref{th:sp-is-arch} and
Remark~\ref{remark:bracket}.  Each row gives the results for one value of $\alpha_{1}$. The
second column gives the theoretical values of $w_{11}$ for the (untransformed) ARCH(1)
process. The last four elements of this column contain NA's since the process doesn't possess
fourth moments for $\alpha_{1} > 1/\sqrt{3}$.
\begin{table}

\caption{\label{tab:arch1:n2000}ARCH(1) example, $n = 2000$.}
\centering
\begin{tabular}[t]{rrrrrr}
\toprule
$\beta_{1}$ & $w_{ii}$ & $\lambda = 0.1$ & $\lambda = 0.5$ & $\lambda = 0.75$ & $\lambda = 1$\\
\midrule
0.05 & 1.101 & 0.049 & 0.048 & 0.049 & 0.051\\
0.15 & 1.322 & 0.051 & 0.053 & 0.048 & 0.053\\
0.25 & 1.615 & 0.051 & 0.053 & 0.048 & 0.051\\
0.35 & 2.107 & 0.052 & 0.049 & 0.051 & 0.051\\
0.45 & 3.293 & 0.051 & 0.046 & 0.046 & 0.045\\
\addlinespace
0.55 & 12.892 & 0.051 & 0.052 & 0.052 & 0.047\\
0.65 & NA & 0.049 & 0.049 & 0.052 & 0.043\\
0.75 & NA & 0.050 & 0.048 & 0.048 & 0.045\\
0.85 & NA & 0.052 & 0.050 & 0.045 & 0.042\\
0.95 & NA & 0.053 & 0.049 & 0.046 & 0.039\\
\bottomrule
\end{tabular}
\end{table}
The last column, $\lambda = 1$, is for the untransformed time series.

We calculated $\hat{w}_{11}$ even for $\alpha_{1}$ when the asymptotic distribution is not
valid. For $\alpha_{1}$ up to  $0.35$ the empirical rate is very close to the nominal
5\%. For the rest it decreases to $0.039$ as the value of the parameter approaches the
boundary $\alpha_{1} = 1$ of the region where second moments exist. Note that $\alpha_{1}=
0.55$ is close to the fourth moment border, $1/\sqrt{3}$, so $w_{11}$ jumps to $12.892$ but
the rejection rate doesn't deteriorate catastrophically.

According to Theorem~\ref{th:sp-is-arch} the test is guaranteed to work for
$\brspower{\tsmain{t}}{\lambda}$ with $\lambda \le 0.5$ for all values of $\alpha_{1}$ in
$(0,1)$. The columns for $\lambda = 0.5$ and $\lambda = 0.1$ are indeed around $0.5$ for all
values of the parameter. The same is true for $\lambda = 0.75$ even though
$\brspower{\tsmain{t}}{0.75}$ may not have fourth moment for the largest $\alpha_{1}$'s.

To check the performance with shorter time series, we did also a similar simulation with
shorter time series ($n = 500$). The results were very similar and not included here.

In summary, this example gives results expected in view of Theorem~\ref{th:sp-is-arch} and
supporting the theory for the case when the underlying process is ARCH-type. Decisions based
on $\brspower{\tsmain{t}}{\lambda}$ are somewhat more likely to be correct, especially when
the fourth moments don't exist, but the main benefit of the signed transform here is that of
giving reassurance that an ARCH-type model is suitable.


\subsection{Non-ARCH-type white noise}
\label{sec:non-arch-type}

This example illustrates the usefulness of the signed power transform for detecting that
the underlying process is not ARCH-type even when it is white noise.

Mixture autoregressive (MAR) models \citep{WongLi2000} are in general autocorrelated but they
are white noise on a subset of the parameter space
\citep{ravagli2020bayesian}.
We consider here the following MAR(2;1,1) models 
\begin{align}                                               \nonumber
  F(X_{t} |\tsmainfield{t-1})
  &\equiv \Pr(X_{t}\le x | \tsmainfield{t-1}) 
  \\ &= 0.25 F_{0}(X_{t} - 0.3X_{t-1})
       + 0.75 F_{0}((X_{t} + 0.1X_{t-1}) / \sigma_{2})        \label{eq:marwn} 
       ,
\end{align}
where $F_{0}$ is the distribution function of the standardised~$t_{3}$ distribution and
$\sigma_{2}$ is a scale parameter. These models are white noise but not ARCH-type and we are
interested if the proposed methodology can help to detect this. Also,
$\Ex |\tsmain{t}|^{a} =\infty$ for $a \ge 3$.

We did a small simulation study using the model in Equation~\eqref{eq:marwn} with
$\sigma_2 = 1,\dots,10$.  For each $\sigma_{2}$, we simulated 10,000 time series of length
$n=2000$ from the model \citep[using R package mixAR, see][]{boshnakov2020mixar}.  For each time series, $\ptsmain{t}$, we calculated $\acfhat{1}$ and
$\hat{w}_{11}$ of the time series and its signed powers $\lambda = 0.1, 0.2, \dots, 0.9$.
%
%
Table~\ref{tab:t3:n2000} shows the fraction of times when the lag~1 sample autocorrelation
coefficient was outside the 95\% limits. suggested by Theorem~\ref{th:sp-is-arch}.  The
$(i,j)$th cell in the table gives the result for $\sigma_{2}= i$ and $\lambda = j/10$ (so,
the last column is for $\lambda = 1$, the untransformed time series).  We expect these values
to be around 5\% when the null hypothesis that the time series are ARCH-type is correct. We
hope also that the test will be able to detect departures from $H_{0}$, i.e., it will reject
$H_{0}$ more often when the underlying time series are not ARCH-type.
%
%

\begin{table}

\caption{\label{tab:t3:n2000}Empirical p-values for MAR model with $F_0(\cdot)$ cdf of standardised-$t_3$ distribution. The first column is $\sigma_{2}$. Each column gives the results for a given $\lambda$. $n = 2000$.}
\centering
\begin{tabular}[t]{lrrrrrrrrrr}
\toprule
  & 0.1 & 0.2 & 0.3 & 0.4 & 0.5 & 0.6 & 0.7 & 0.8 & 0.9 & 1\\
\midrule
1 & 0.056 & 0.061 & 0.058 & 0.056 & 0.057 & 0.056 & 0.055 & 0.054 & 0.048 & 0.044\\
2 & 0.269 & 0.227 & 0.173 & 0.130 & 0.100 & 0.071 & 0.055 & 0.048 & 0.047 & 0.044\\
3 & 0.632 & 0.547 & 0.416 & 0.292 & 0.195 & 0.118 & 0.072 & 0.053 & 0.044 & 0.042\\
4 & 0.855 & 0.763 & 0.618 & 0.438 & 0.276 & 0.156 & 0.091 & 0.058 & 0.047 & 0.043\\
5 & 0.945 & 0.877 & 0.740 & 0.539 & 0.340 & 0.186 & 0.096 & 0.061 & 0.046 & 0.042\\
\addlinespace
6 & 0.977 & 0.933 & 0.807 & 0.608 & 0.383 & 0.214 & 0.102 & 0.063 & 0.045 & 0.045\\
7 & 0.990 & 0.958 & 0.863 & 0.662 & 0.423 & 0.218 & 0.110 & 0.063 & 0.047 & 0.043\\
8 & 0.995 & 0.974 & 0.895 & 0.697 & 0.444 & 0.236 & 0.112 & 0.062 & 0.048 & 0.043\\
9 & 0.998 & 0.983 & 0.917 & 0.736 & 0.461 & 0.245 & 0.116 & 0.063 & 0.043 & 0.044\\
10 & 0.998 & 0.987 & 0.928 & 0.749 & 0.488 & 0.247 & 0.125 & 0.066 & 0.046 & 0.045\\
\bottomrule
\end{tabular}
\end{table}

We see from Table~\ref{tab:t3:n2000} that the values for the untransformed time series (last
column) are between 4.2\%--4.5\%, close but just over half percentage point below the nominal
5\%.  As $\lambda$ decreases the rejection rate increases, i.e., we are more likely to reject
the ARCH-type hypothesis. For example, for $\lambda = 0.5$ and $\sigma_{2}= 7$ we reject the
ARCH-type white noise hypothesis in about 40\% of the cases. With $\lambda = 0.1$ the
rejection rates are mostly over 90\%. An exception is the model with $\sigma_{2}=1$ for which
all values stay close to 5\%.

To check how the performance with shorter time series, we did also a similar simulation with
shorter time series ($n = 500$)%
. The results were qualitatively similar but
more conservative.

%

In summary, the signed power transform in this example allows as to detect that the
underlying process is white noise but not of ARCH-type.

%
%
%
%
%

\bibliographystyle{elsarticle-harv}
\bibliography{signedpower}

\appendix

\section{Signed power as  a solution to a functional equation}
\label{sec:signed-power-as}


\begin{lemma}
  Consider the functional equation 
  \begin{align}
    \label{eq:funxy}
    f(xy) = f(x)f(y)
    , \qquad
    \{ (x,y) : [0,\infty) \times \mathbb{R} \}
    ,
  \end{align}
  where $f: \mathbb{R} \to \mathbb{R}$ is continuous for $x \neq 0$.  The solutions of
  Equation~\eqref{eq:funxy} are:
  \begin{align}
    f_{0}(x) &\equiv 0   
               , \qquad
    f_{1}(x) \equiv 1   
               , \qquad
    f(x) =            
    \begin{cases}
      \abs{x}^{\lambda}   &\text{for $x > 0$,} \\
      0             & \text{for $x = 0$,} \\
      c \abs{x}^{\lambda} &\text{for $x<0$,} 
    \end{cases}  \label{eq:f2}
  \end{align}
  where $c$ is arbitrary. $f(x)$ is continuous on $\mathbf{R}$ when $a > 0$, otherwise it is
  continuous on $\mathbf{R}\setminus\{0\}$.

\end{lemma}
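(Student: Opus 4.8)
The plan is to proceed by substitution to strip off the two constant solutions, and then reduce the remaining case to Cauchy's additive equation. First I would substitute $x=1$ into \eqref{eq:funxy}, which is legitimate since $(1,y)\in[0,\infty)\times\mathbb{R}$ for every $y$; this yields $f(y)=f(1)f(y)$, so either $f(1)=1$ or $f\equiv 0$, the latter being $f_{0}$. Assuming $f(1)=1$, I would next substitute $x=0$ to get $f(0)=f(0)f(y)$ for all $y$; if $f(0)\neq 0$ this forces $f\equiv 1$, which is $f_{1}$, so in the remaining case $f(0)=0$. This disposes of the two trivial solutions and fixes the boundary values $f(0)=0$ and $f(1)=1$ for the rest of the argument.

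The core step treats $x,y>0$, where \eqref{eq:funxy} is the ordinary multiplicative Cauchy equation. I would first show $f>0$ on $(0,\infty)$: writing $f(x)=f(\sqrt{x})^{2}\ge 0$ gives nonnegativity, and if $f(x_{0})=0$ for some $x_{0}>0$ then $f(z)=f(x_{0})f(z/x_{0})=0$ for every $z>0$, contradicting $f(1)=1$. With positivity in hand I can set $g(u)=\log f(e^{u})$, which is well defined and continuous because $f$ is continuous and positive on $(0,\infty)$; the equation becomes the additive Cauchy equation $g(u+v)=g(u)+g(v)$. Continuity forces $g(u)=\lambda u$ for some real $\lambda$, hence $f(x)=x^{\lambda}=\abs{x}^{\lambda}$ for $x>0$.

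It remains to determine $f$ on the negative axis, which I would reach by using a negative second argument. For $x<0$, write $x=(-x)\cdot(-1)$ with first factor $-x>0$ and second factor $-1\in\mathbb{R}$; then \eqref{eq:funxy} gives $f(x)=f(-x)f(-1)=\abs{x}^{\lambda}f(-1)$, so setting $c:=f(-1)$ yields the form in \eqref{eq:f2}. The point to verify — and the one mild subtlety — is that $c$ is genuinely unconstrained: since the first argument in \eqref{eq:funxy} is never negative, the value $f(-1)$ enters only through pairs $(x,y)$ with $y<0$, and substituting the candidate $f$ there reduces to the identity $c\,x^{\lambda}\abs{y}^{\lambda}=c\,x^{\lambda}\abs{y}^{\lambda}$, imposing nothing on $c$. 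Finally, continuity of $f$ on all of $\mathbb{R}$ hinges on the behaviour at $0$: since $\abs{x}^{\lambda}\to 0$ as $x\to 0$ precisely when $\lambda>0$, the function is continuous on $\mathbb{R}$ for $\lambda>0$ and only on $\mathbb{R}\setminus\{0\}$ otherwise, as claimed.

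The main obstacle is really the positivity argument on $(0,\infty)$, since it is what licenses passing to logarithms and thereby the clean reduction to the additive Cauchy equation; everything else is either a direct substitution or the standard fact that a continuous additive function is linear.
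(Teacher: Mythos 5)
Your proof is correct and follows essentially the same route as the paper: strip off the constant solutions $f_{0}$ and $f_{1}$ by substituting special values, determine $f$ on $(0,\infty)$ from the multiplicative Cauchy equation, and extend to negative arguments by writing $f(y)=f(-1)f(|y|)$ with $c=f(-1)$ arbitrary. The only difference is that where the paper simply cites the characterisation of the continuous solutions on the positive half-line as well known, you supply the standard argument for it (positivity via $f(x)=f(\sqrt{x})^{2}$ and the impossibility of a positive zero, then the logarithmic change of variables to the additive Cauchy equation), which makes your write-up more self-contained at that one point.
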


Notice the restriction, $x\ge0$, imposed on $x$ but not $y$. This is because we need this
property for $\brspower{(\sigma_{t}\eta_{t})}{\lambda}$, where $\sigma_{t}\ge0$. 


\begin{proof}
  
  That $f_{0}()$, $f_{1}()$, and $f()$, given by Equation~\eqref{eq:f2} are solutions can be
  verified directly.

Further, $f_{1}()$ is the only solution (continuous or not) with $f(0) \neq 0$. Indeed,
setting $x = y = 0$ in Equation~\eqref{eq:funxy} we get $f(0) = f(0)^{2}$, so $f(0) = 1$ or
$f(0) = 0$. Similarly, for $x = 0$ and arbitrary $y$ we have $f(0) = f(0)f(y)$. The last
equation implies that $f(0) = 1$ gives a single solution $f(y) \equiv 1$.  So, $f(0) = 0$ for
any other solution.

Similarly, $f_{0}()$ is the only solution with $f(y) = 0$ for some $y > 0$ and
$f(1) \neq 1$. Indeed, setting $x=y=1$ in Equation~\eqref{eq:funxy} gives $f(1) = f(1)^{2}$,
so $f(1) = 0$ or $f(1) = 1$.  But $f(1) = 0$ implies that $f(y) = 0$ for all $y$ (which is
$f_{0}$), so for any other solution $f(1) = 1$. Also, if $f(y) = 0$ for some $y > 0$, then
for any $z$ we have $f(z) = f(y) f(z/y) = 0$ (i.e., $f_{0}(z)$).
Note that this last argument doesn't apply for $y < 0$ and $z > 0$, since then both $y$ and
$z/y$ will be negative, in which case  Equation~\eqref{eq:funxy} is not required to hold.

Consider $y<0$. Write $y = \abs{y}(-1)$ and use Equation~\eqref{eq:funxy} to get
\begin{align} \label{eq:fext}
  f(y) &= f(\abs{y}(-1)) = f(\abs{y})f(-1)
  .
\end{align}
So, for $y<0$ we have $f(y) = c f(\abs{y})$, where $c = f(-1)$.  Now, let $f(y)$ be any
solution of \eqref{eq:funxy} restricted to the non-negative real numbers, i.e.,
$f(xy) = f(x)f(y)$ for $x,y\ge0$.  The above calculations show that we can extend it to the
whole real axis by setting $f(y) = c f(\abs{y})$ for $y < 0$, where $c$ can be any constant
when $f(0) = 0$, otherwise (i.e., if $f(0)=1$) $c = 1$, giving $f_{1}()$, as discussed above.


If $x$ and $y$ are both restricted to have non-negative values, it is well known that the
non-trivial continuous (for $y > 0$) solutions of $f(xy) = f(x)f(y)$ are $f(y)=y^{\lambda}$ for
$y>0$ and $f(0)=0$, where $a\neq0$. Another solution is $f(x) = 1$ for $x>0$ and $f(0) = 0$
but it can be considered a special case of $y^{\lambda}$ with $a = 0$. Extending these solutions
to $y < 0$ using Equation~\eqref{eq:fext} we get Equation~\eqref{eq:f2}.

Clearly, there are no other solutions to Equation~\eqref{eq:funxy}.
\end{proof}

Choosing $c=-1$ gives signed power transform defined in the main text. In this case,
$\lambda = 0$ gives the sign function.

 \end{document}